\spnewtheorem{fact}{Fact}{\bfseries}{\itshape}
\spnewtheorem*{my_definition}{Definition}{\bfseries}{\itshape}
\spnewtheorem*{swap}{Problem ``SM"
 (Pattern Matching with Swaps)}{\bfseries}{\itshape}
\newcommand{\rom}[1]{\uppercase\expandafter{\romannumeral #1\relax}}
\begin{document}
\pagestyle{headings}

\pagestyle{headings}

\mainmatter

\title{The Swap Matching Problem Revisited}

\titlerunning{Lecture Notes in Computer Science}

\author{${\text{Pritom Ahmed}}^1$ \and ${\text{Costas S. Iliopoulos}}^2$ \and ${\text{A.S.M. Sohidull Islam}}^1$ \and ${\text{M. Sohel Rahman}}^1$}

\institute{A$\ell$EDA Group,\\
Department of Computer Science, BUET, Dhaka,\\
\email{\{pritom.11,sohansayed\}@gmail.com, msrahman@cse.buet.ac.bd}\\
\texttt{http://teacher.buet.ac.bd/msrahman}
\and 
Algorithm Design Group,\\
Department of Computer Science, King's College London, University of London\\
\email{csi@dcs.kcl.ac.uk}\\
http://www.dcs.kcl.ac.uk/staff/csi}

\maketitle

\begin{abstract}
In this paper, we revisit the much studied problem of Pattern
Matching with Swaps (Swap Matching problem, for short). We first
present a graph-theoretic model, which
opens a new and so far unexplored avenue to solve the problem. Then,
using the model, we devise two efficient algorithms to solve the swap
matching problem. The resulting algorithms are adaptations of the
classic shift-and algorithm. For patterns having length similar to
the word-size of the target machine, both the algorithms run in linear time considering a fixed alphabet.
\end{abstract}

\keywordname{ Algorithms; Strings; Swap Matching; Graphs.}

\section{\label{intro}Introduction}
The classical pattern matching problem is to find all the occurrences of a given pattern $P$ of length $m$ in a text $T$ of length $n$, both being sequences of characters drawn from a finite character set $\Sigma$. This problem is interesting as a fundamental computer science problem and is a basic need of many practical applications such as text retrieval, music information retrieval, computational biology, data mining, network security, among many others. In this paper, we revisit the Pattern Matching with Swaps problem (the Swap Matching problem, for short), which is a well-studied variant of the classic pattern matching problem. In this problem, the pattern $P$ is said to $swap~match$ the text $T$ at a given location $i$, if adjacent pattern characters can be swapped, if necessary, so as to make the pattern identical to the substring of the text ending (or equivalently, starting) at location $i$. All the swaps are constrained to be disjoint, i.e., each character is involved in at most one swap.

Amir et al.~\cite{DBLP:journals/jal/AmirALLL00} obtained the first non-trivial results for this problem. They showed how to solve the problem in time $O(nm^{1/3} \log m\log\sigma)$, where $\sigma = \min(|\Sigma|,m)$. Amir et al.~\cite{DBLP:journals/ipl/AmirLLL98} also studied certain special cases for which $O(n \log^2 m)$ time solution can be obtained. However, these cases are rather restrictive. 
Later, Amir et al.~\cite{DBLP:journals/iandc/AmirCHLP03} solved the Swap Matching problem in time $O(n\log m\log\sigma)$. We remark that all the above solutions to swap matching depend on Fast Fourier Transformation (FFT) technique. Recently, Cantone and Faro~\cite{CS} presented a dynammic programming approach to solve the swap matching problem which runs in linear time for finite character set $\Sigma$, when patterns are compatible with the word size of the target machine. Notably the work of~\cite{CS} avoids the use of FFT technique. Cantone, Faro and Campanelli presented another approach in~\cite{CCS} to solve the Swap matching problem. Though the algorithm of~\cite{CCS} runs in $O(nm)$ time for patterns compatible with the word size of the target machine, in practice it achieves quite good result. In fact as it turns out, the algorithm of~\cite{CCS} outperforms the algorithm of~\cite{CS} most of the time.  
Notably, approximate swapped matching~\cite{DBLP:journals/ipl/AmirLP02} and swap matching in weighted sequences~\cite{DBLP:conf/cis/ZhangGI04} have also been studied in the literature.

\subsection{Our Contribution}
The contribution of this paper is as follows. We first present a graph-theoretic approach to model the swap matching problem. Using this model, we devise two efficient algorithms to solve the swap matching problem. The resulting algorithms are adaptation of the classic shift-and algorithm~\cite{CharrasL04} and runs in linear time if the pattern size is similar to the size of word in the target machine, assuming a fixed alphabet size. Notably, some preliminary results of this paper were presented in~\cite{IR}. In~\cite{IR}, an algorithm running in $O(m/w n\log m)$ time was presented, where $w$ is the machine word size. For short patterns, i.e., pattern size similar to machine word size, this runtime becomes $O(n\log m)$. Hence the result in this paper clearly improves the results of~\cite{IR} and matches the result of~\cite{CS}. Finally, we present experimental results to compare the non-FFT algorithms of~\cite{CS,CCS} and our work.  

\subsection{RoadMap}
The rest of the paper is organized as follows. In Section~\ref{Pre}, we present some preliminary notations and definitions. In Section~\ref{Model}, we present our model to solve the swap matching problem. In Section~\ref{Algo}, we present two different algorithms to solve the swap matching problem. Section~\ref{experiment}, presents the experimental results. Finally, we briefly conclude in Section~\ref{conclusion}.

\section{\label{Pre}Preliminaries}
A \emph{string} is a sequence of zero or more symbols from an alphabet, $\Sigma$. A string $X$ of length $n$ is denoted by $X[1..n] = X_1 X_2 \ldots X_{n}$, where $X_i \in \Sigma$ for $1 \leq i \leq n$. The \emph{length} of $X$ is denoted by $|X|= n$. A string $w$ is called a \emph{factor} of $X$ if $X = uwv$ for $u, v\in \Sigma^*$; in this case, the string $w$ occurs at position $|u|+1$ in $X$. The factor $w$ is denoted by $X[|u|+1..|u|+|w|]$. A \emph{$k$-factor} is a factor of length $k$. A \emph{prefix (or suffix)} of $X$ is a factor $X[x..y]$ such that $x=1~(y=n)$, $1\leq y \leq n~(1\leq x \leq n)$. We define the $i$-th prefix to be the prefix ending at position $i$, i.e., $X[1..i], 1\leq i\leq n$. On the other hand, the $i$-th suffix is the suffix starting at position $i$, i.e., $X[i..n], 1\leq i\leq n$.

\begin{definition}
A \emph{swap permutation} for $X$ is a permutation
$\pi: \{1,\ldots,n\}\rightarrow \{1,\ldots,n\}$ such that:
\begin{enumerate}
\item if $\pi(i) = j~ \text{then}~\pi(j) = i$ (characters are swapped).
\item for all $i, \pi(i)\in \{i-1,i,i+1\}$ (only adjacent characters are swapped).
\item if $\pi(i) \neq i ~\text{then}~ X_{\pi(i)} \neq X_i$ (identical characters are not
swapped).
\end{enumerate}
\end{definition}

For a given string $X$ and a swap permutation $\pi$ for $X$, we use
$\pi(X)$ to denote the \emph{swapped version} of $X$, where
$\pi(X)=X_{\pi(1)} X_{\pi(2)} \ldots X_{\pi(n)}$.

\begin{definition}
Given a text $T = T_1 T_2 \ldots T_n$ and a pattern $P = P_1 P_2
\ldots P_m$, $P$ is said to swap match at location $i$ of $T$
if there exists a swapped version $P'$ of $P$ that matches $T$ at
location\footnote{Note that, we are using the end position of the
match to identify it.} $i$, i.e. $P'_j = T_{i-m +j}$ for
$j\in[1..m]$.
\end{definition}

\begin{swap}
Given a text $T = T_1 T_2 \ldots T_n$ and a pattern $P = P_1 P_2
\ldots P_m$, we want to find each location $i\in[1..n]$ such that
$P$ swap matches with $T$ at location $i$.
\end{swap}

%
\begin{definition}\label{Def_Degenerate_String}
A string $X$ is said to be degenerate, if it is built over the
potential $2^{|\Sigma|}-1$ non-empty sets of letters belonging to
$\Sigma$.
\end{definition}

\begin{example}
Suppose we are considering DNA alphabet, i.e., $\Sigma = \Sigma_{DNA}=\{A,C,T,G\}$. Then we have 15 non-empty sets of letters belonging to $\Sigma_{DNA}$. In what follows, the set containing $A$ and $T$ will be denoted by $[AT]$ and the singleton $[C]$ will be simply denoted by $C$ for ease of reading. The set containing all the letters, namely $[ACTG]$, is known as the don't care character in the literature.
\end{example}

\begin{definition}\label{Def_Degenerate_Match}
Given two degenerate strings $X$ and $Y$ each of length $n$, we say
$X[i]$ \emph{matches} $Y[j], 1\leq i,j\leq n$ 
if, and only if, $X[i]\cap Y[j]\neq \emptyset$.
\end{definition}

\begin{example}\label{Ex_Deg_Match}
Suppose we have degenerate strings $X = AC[CTG]TG[AC]C$ and $Y =
TC[AT][AT]TTC$. Here, $X[3]$ matches $Y[3]$ because $X[3] = [CTG]
\cap Y[3] = [AT] = T \neq \emptyset$. 
\end{example}

\section{\label{Model}The Graph-Theoretic Model for Swap Matching}
In this section, we propose the graph-theoretic model to solve the swap matching problem. In this model, both the text and the pattern are viewed as two separate graphs. We start with the following definitions.

\begin{definition}
The \textbf{\emph{$\mathcal T$-graph}} is defined in the following way:

Given a text $T=T_1\dots T_n$ of Problem SM, a
\textbf{\emph{$\mathcal T$-graph}}, denoted by $T^G = (V^T, E^T) $,
is a directed graph with $n$ vertices and $n-1$ edges such that $V^T
= \{1, 2, \ldots n\}$ and $E^T = \{(i,i+1)~|~1\leq i<n\}$. For each
$i\in V^T$, we define $label(i) = T_i$ and for each edge $e \equiv
(i,j) \in E_T$, we define $label(e) \equiv label((i,j)) \equiv
(label(i), label(j)) = (T_i,T_j)$.
\end{definition}

Note that, the labels in the above definition may not be unique.
Also, we normally use the labels of the vertices and the edges to
refer to them.

\begin{figure}[h!]
\begin{center}

\begin{tabular}{ccccccccccccccccccccccccccccc}
a&$\rightarrow$&c&$\rightarrow$&a&$\rightarrow$&c&$\rightarrow$&b&$\rightarrow$&a&$\rightarrow$&c&$\rightarrow$&c&$\rightarrow$&b&$\rightarrow$&a&$\rightarrow$&c&$\rightarrow$&a&$\rightarrow$&c&$\rightarrow$&b&$\rightarrow$&a\\
\end{tabular}
 \caption{The corresponding $\mathcal T$-graph of Example~\ref{Ex_T_Graph}}
\label{Fig_T_Graph}
\end{center}
\end{figure}

\begin{example}\label{Ex_T_Graph}
Suppose, $T = acacbaccbacacba$. Then the corresponding $T$-graph is
shown in Figure~\ref{Fig_T_Graph}.
\end{example}

\begin{definition}

The \textbf{\emph{$\mathcal P$-graph}} is defined in the following way:

Given a text $P=P_1\dots P_m$ of Problem SM, a
\textbf{\emph{$\mathcal P$-graph}}, denoted by $P^G = (V^P, E^P) $,
is a directed graph with $3m-2$ vertices and at most $5m-9$ edges.
The vertex set $V^P$ can be partitioned into three disjoint vertex
sets, namely, $V^P_{(+1)},V^P_{0},V^P_{(-1)}$ such that $|V^P_{(+1)}|
= |V^P_{(-1)}| = m -1 ~\text{and}~ |V^P_{(0)}| = m$. The partition
is defined in a $3\times m$ matrix $M[3,m]$ as follows. For the sake
of notational symmetry we use $M[-1], M[0]$ and $M[+1]$ to denote
respectively the rows $M[1], M[2]$ and $M[3]$ of the matrix $M$.
\begin{enumerate}
\item $V^P_{(-1)} = \{M[-1,2], M[-1,3], \ldots  M[-1,m]\}$
\item $V^P_{(0)} = \{M[0,1], M[0,2], \ldots  M[0,m]\}$
\item $V^P_{(+1)} = \{M[+1,1], M[+1,2], \ldots  M[+1,m-1]\}$
\end{enumerate}
The labels of the vertices are derived from $P$ as follows:

\begin{enumerate}
\item For each vertex $M[-1,i]\in V^P_{(-1)}, 1<i\leq m$,
label(M[-1,i]) =  $P_{i-1}$

\item For each vertex $M[0,i]\in V^P_{(0)}, 1\leq i\leq m, label(M[0,i]) =
P_i$
\item For each vertex $M[+1,i]\in V^P_{(+1)}, 1\leq i< m$,
label(M[+1,i]) = $P_{i+1}$

\end{enumerate}

The edge set $E^P$ is defined as the union of the sets $E^P_{(-1)},
E^P_{(0)}$ and $E^P_{(+1)}$ as follows:

\begin{enumerate}
\item $E^P_{(-1)} = \{(M[-1,i], M[0,i+1]), (M[-1,i], M[+1,i+1])~|~ 2\leq i\leq m-2\}$
\item $E^P_{(0)} = \{(M[0,i], M[0,i+1])~|~ 1\leq i\leq m-1\} ~\bigcup~ \{((M[0,i], M[+1,i+1])~|~ 1\leq i\leq m-2\}$
\item $E^P_{(+1)} = \{(M[+1,i], M[-1,i+1])~|~ 1\leq i\leq m-1\}$
\end{enumerate}

The labels of the edges are derived from using the labels of the
vertices in the obvious way.
\end{definition}

\begin{figure}[hbtp]
\begin{center}
\includegraphics[width=0.8\textwidth,angle=0]{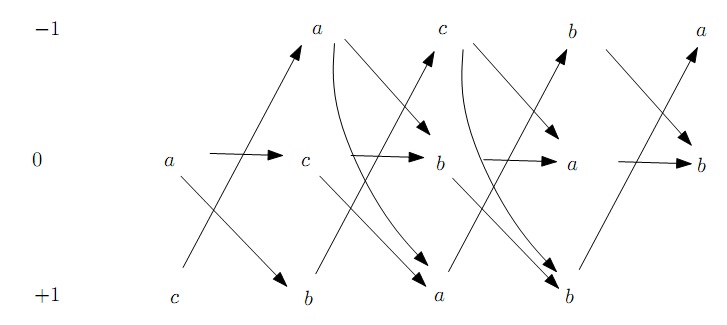}
\caption{$\mathcal P$-graph of the Pattern P = acbab} \label{Fig_P_Graph1}
\end{center}
\end{figure}

\begin{example}\label{Ex_P_Graph}
Suppose, $P = acbab$. Then the corresponding $\mathcal P$-graph
$P^G$ is shown in Figure~\ref{Fig_P_Graph1}.
\end{example}

\begin{definition}

Given a $\mathcal P$-graph $P^G$, a path $Q = u_1\rightsquigarrow
u_\ell = u_1 u_2 \ldots u_\ell $ is a sequence of consecutive
directed edges $\langle(u_1,u_2),(u_2,u_3),\dots(u_{\ell-1},u_\ell)
\rangle$ in $P^G$ starting at node $u_1$ and ending at node
$u_\ell$. The length of the path $Q$, denoted by $len(Q)$, is the
number of edges on the path and hence is $\ell -1$ in this case. It
is easy to note that the length of a longest path in $P^G$ is $m-1$.
\end{definition}

\begin{definition}

Given a $\mathcal P$-graph $P^G$ and a $\mathcal T$-graph $T^G$, we
say that $P^G$ matches $T^G$ at position $i\in[1..n]$ if, and only if,
there exists a path $Q = u_1 u_2 \ldots u_m$ in $P^G$ having
$u_1\in\{M[0,1], M[+1,1]\}$ and $u_m\in\{M[-1,m], M[0,m]\}$ such that
for $j\in[1..m]$ we have $label(u_j) = T_{i-m+j}$.
\end{definition}

This completes the definition of the graph theoretic model. The following Lemma presents the idea to solve the swap matching problem using the presented model.

\begin{lemma}
Given a pattern $P$ of length $m$ and a text $T$ of length $n$,
suppose $P^G$ and $T^G$ are the $\mathcal P$-graph and $\mathcal
T$-graph of $P$ and $T$, respectively. Then, $P$ swap matches $T$ at
location $i\in [1..n]$ of $T$ if and only if $P^G$ matches $T^G$ at
position $i\in [1..n]$ of $T^G$.
\end{lemma}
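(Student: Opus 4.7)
The plan is to establish a bijective correspondence between swap permutations $\pi$ of $P$ with $\pi(P) = T[i-m+1..i]$ on one hand, and paths $Q = u_1 u_2 \ldots u_m$ of length $m-1$ in $P^G$ starting at some $u_1 \in \{M[0,1], M[+1,1]\}$ and ending at some $u_m \in \{M[-1,m], M[0,m]\}$ on the other. Granted such a correspondence, the swap match of $P$ at position $i$ and the $P^G$ match at position $i$ both become statements about the existence of exactly these combinatorial objects, which is the content of the lemma.

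For the \emph{only if} direction, I would start from a swap permutation $\pi$ with $\pi(P)_j = T_{i-m+j}$ and define $r_j := \pi(j) - j$; by condition~2 of the swap permutation definition, $r_j \in \{-1,0,+1\}$, so $u_j := M[r_j, j]$ is a well-defined vertex of $P^G$. The labels automatically agree: $label(u_j) = P_{j+r_j} = P_{\pi(j)} = \pi(P)_j = T_{i-m+j}$. The endpoint constraints follow from $\pi(1)\in\{1,2\}$ (giving $r_1\neq -1$) and symmetrically $r_m\neq +1$. The remaining verification that each consecutive pair $(u_j, u_{j+1})$ lies in $E^P$ is a case analysis driven by condition~1 (involution): if $r_j = +1$ then $\pi(j)=j+1$, so $\pi(j+1)=j$ forces $r_{j+1}=-1$, giving an edge in $E^P_{(+1)}$; if $r_j\in\{0,-1\}$ then $\pi(j)\leq j$, so $\pi(j+1)\neq j$, hence $r_{j+1}\in\{0,+1\}$, matching the edges listed in $E^P_{(0)}$ and $E^P_{(-1)}$.

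For the \emph{if} direction, given such a path $Q$, I would read off $r_j$ as the row index of $u_j$ and set $\pi(j) := j+r_j$. Adjacency (condition~2) is immediate. For involution (condition~1), the key structural observation about $P^G$ is that the only edge leaving any $M[+1,j]$ goes to $M[-1,j+1]$, and conversely the only edges entering any $M[-1,j+1]$ come from $M[+1,j]$. Hence $r_j=+1$ iff $r_{j+1}=-1$: the sequence $(r_j)$ consists of $0$'s and matched consecutive $(+1,-1)$ pairs, each pair corresponding to a disjoint adjacent swap. Consequently $\pi$ is a well-defined involutive permutation, and $\pi(P)_j = P_{\pi(j)} = label(u_j) = T_{i-m+j}$ gives the match.

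The delicate point, which I expect to be the main obstacle, is condition~3 of the swap permutation definition (identical characters are not swapped): this constraint is \emph{not} encoded in the edge set of $P^G$, since the edge $M[+1,j]\to M[-1,j+1]$ is present regardless of whether $P_j = P_{j+1}$. Fortunately, it never obstructs existence: whenever a path uses such a swap edge with $P_j=P_{j+1}$, I would rewrite the pair $(u_j, u_{j+1})$ as $(M[0,j], M[0,j+1])$, which is always possible because the needed edges $M[\cdot,j-1]\to M[0,j]$, $(M[0,j], M[0,j+1])$, and $M[0,j+1]\to M[\cdot,j+2]$ are all guaranteed by $E^P_{(0)}$ together with the case analysis above, and the labels are unchanged. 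After finitely many such local edits the resulting $\pi$ satisfies condition~3, and the equivalence in the Lemma follows.
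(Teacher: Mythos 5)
Your proof is correct and rests on the same underlying idea as the paper's — a correspondence between paths in $P^G$ from $\{M[0,1],M[+1,1]\}$ to $\{M[-1,m],M[0,m]\}$ and swapped versions of $P$ — but it is a genuine formalization of what the paper only sketches: the paper's argument is a structural description of the $\mathcal P$-graph ending with ``it is easy to see that all the paths of length $m-1$ in $P^G$ represent all combinations considering all possible swaps,'' and it never addresses condition~3 of the swap-permutation definition. Your observation that the edge $(M[+1,j],M[-1,j+1])$ is present even when $P_j = P_{j+1}$, so that the path-to-permutation direction needs the local rewriting of such a segment through row $0$ (which leaves all labels unchanged), repairs a real gap that the paper's proof silently passes over; the rest of your argument (the bookkeeping $r_j=\pi(j)-j$, the endpoint constraints, and the use of the involution property to force $r_j=+1 \Leftrightarrow r_{j+1}=-1$) is a faithful and correct elaboration of the paper's intent.
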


\begin{proof}
The proof basically follows easily from the definition of the
$\mathcal P$-graph. At each column of the matrix $M$, we have all
the characters as nodes considering the possible swaps as explained
below. Each node in row $(-1)$ and $(+1)$ represents a swapped
situation. Now consider column $i$ of $M$ corresponding to $P^G$.
According to definition, we have $M[-1,i] = P_{i-1}$ and $M[+1,i-1]
= P_{i}$. These two nodes represents the swap of $P_{i}$ and
$P_{i-1}$. Now, if this swap takes place, then in the resulting
pattern, $P_{i-1}$ must be followed by $P_i$. To ensure that, in
$P^G$, the only edge starting at $M[+1,i-1]$, goes to $M[-1,i]$. On
the other hand, from $M[-1,i]$ we can either go to $M[0,i+1]$ or to
$M[+1,i+1]$: the former is when there is no swap for the next pair
and the later is when there is another swap for the next pair.
Recall that, according to the definition, the swaps are disjoint.
Finally, the nodes in row $0$ represents the normal (non-swapped)
situation. As a result, from each $M[0,i]$ we have an edge to
$M[0,i+1]$ and an edge to $M[+1,i+1]$: the former is when there is
no swap for the next pair as well and the later is when there is a
swap for the next pair. So it is easy to see that all the paths of
length $m-1$ in $P^G$ represents all combinations considering all
possible swaps in $P$. Hence the result follows.~$\Box$
\end{proof}

Since the number of the possible paths of length $m-1$ in $P^G$ is exponential in $m$, we exploit the above model in a different way and apply a modified version of the classic shift-and~\cite{BG92} algorithm to solve the swap matching problem.

\section{\label{Algo}Our Algorithms for Swap Matching}
In this section, we use the model proposed in Section~\ref{Model} to devise two novel efficient algorithms for the swap matching problem. Both of the algorithms are modified versions of the classic shift-and algorithm for pattern matching. We start with a brief review of the shift-and algorithm below. In Sections~\ref{AIIR} and~\ref{NewAlgo} we present the modifications needed to adapt it to solve the swap matching problem.

\subsection{\label{shift-and}Shift-And Algorithm}
The shift-and algorithm uses the bitwise techniques and is very
efficient if the size of the pattern is no greater than the word
size of the target processor. The following description of the
shift-and algorithm is taken from~\cite{CharrasL04} after slight
adaptation to accommodate our notations.

Let $R$ be a bit array of size $m$. Vector $R_j$ is the value of the
array $R$ after text character $T_j$ has been processed. It contains
information about all matches of prefixes of $P$ that end at
position $j$ in the text. So, for $1\leq i\leq m$ we have:
\begin{equation}
R_j[i] = \begin{cases}
                            1& \text{if $P[1..i] = T[j-i+1..j]$},\\
                            0& \text{Otherwise.}
         \end{cases}
\end{equation}

The vector $R_{j+1}$ can be computed after $R_j$ as follows. For
each $R_j[i]=0$:
\begin{equation}
R_{j+1}[i+1] = \begin{cases}
                            1& \text{if $P_{i+1} = T_{j+1}$},\\
                            0& \text{Otherwise.}
         \end{cases}
\end{equation}

and

\begin{equation}
R_{j+1}[0] = \begin{cases}
                            1& \text{if $P_0 = T_{j+1}$},\\
                            0& \text{Otherwise.}
         \end{cases}
\end{equation}

If $R_{j+1}[m]=1$ then a complete match can be reported. The transition from $R_j$ to $R_{j+1}$ can be computed very fast as
follows. For each $c\in\Sigma$ let $D_c$ be a bit array of size $m$
such that for $1\leq i\leq m$, $D_c[i]=1$ if and only if $P_i=c$.The
array $D_c$ denotes the positions of the character $c$ in the
pattern $P$. Each $D_c$ for all $c\in\Sigma$ can be preprocessed
before the pattern search. Then the computation of $R_{j+1}$ reduces
to two simple operations, namely, \textbf{shift} and \textbf{and} as follows: $$R_{j+1}=~ SHIFT(R_j)~ AND~
D_{T_{j+1}}$$

\subsection{The First Algorithm: SMALGO-\rom{1}}{\label{AIIR}

In this section, we present a modification of  the shift-and algorithm to solve the swap matching problem using the graph model presented in Section~\ref{Model}. In what follows the resulting algorithm shall be referenced to as SMALGO-\rom{1}. The idea of SMALGO-\rom{1} is described below.

First of all, the shift-and algorithm can be extended easily for the degenerate patterns~\cite{BG92}. In our swap matching model the pattern can be thought of having a set of letters at each position as follows: $\tilde{P} = [M[0,1]M[+1,1]]~[M[-1,2]M[0,2]M[+1,2]]\ldots[M[-1,m-1]M[0,m-1][+1,m-1]]~[M[-1,m]M[0,m]]$. Note that we have used $\widetilde{P}$ instead of $P$ above because, in our case, the sets of characters in the consecutive positions in the pattern $P$ don't have the same relation as in a usual degenerate pattern. In particular, in our case, a match at position $i+1$ of $P$ will depend on the previous match of position $i$ as the following example shows.

\begin{example}\label{Ex_Diff}
Suppose, $P = acbab$ and $T = bcbaaabcba$. The $\mathcal P$-graph of
$P$ is shown in Figure~\ref{Fig_P_Graph1}. So, in line of above
discussion, we can say that $\widetilde{P} =
[ac][acb][cba][ba][ab]$. Now, as can be easily seen, if we consider
degenerate match, then $\widetilde{P}$ matches $T$ at Positions $2$
and $6$. However, $P$ swap matches $T$ only at Position $6$; not at
Position $2$. To elaborate, note that at Position $2$, the match is
due to `$c$'. So, according to the graph $P^G$ the next match has to
be an `$a$' and hence at Position 2 we can't have a swap match.
\end{example}

For the sake of convenience, in the discussion that
follows, we refer to both $\widetilde{P}$ and the pattern $P$ as
though they were equivalent; but it will be clear from the context
what we really mean. Suppose we have a match up to position $i< m$
of $\widetilde{P}$ in $T[j-i+1..j]$. Now we have to check whether
there is a match between $T_{j+1}$ and $P_{i+1}$. For simple
degenerate match, we only need to check whether $T_{j+1}\in P_{i+1}$
or not. However, as Example~\ref{Ex_Diff} shows, for our case we
need to do more than that.

In what follows, we present a novel technique to adapt the shift-and algorithm to tackle the above situation. Suppose that $T_j = c = M[\ell,i]$. Now, from the $\mathcal P$-graph we know which
of the $M[k,i+1], k\in[-1,0,+1]$ will follow $M[\ell,i]$ and which
of the $M[k,i+2], k\in[-1,0,+1]$ will follow $M[q,i+1]$. So, for
example, even if $M[q,i+1] = T[j+1]$ we can't continue if there is 
no edge from $M[\ell,i]$ to $M[q,i+1]$ or from $M[q,i+1]$ to $M[r,i+2]$ in the $\mathcal P$-graph.

To tackle this, we define a new notion. Consider 3-member vertex sets $\{u_0,u_1,u_2\}$ and $\{x_0,x_1,x_2\}$ of $\mathcal P$-graph such that there exist edges $(u_{i},u_{i+1})$ and $(x_{i},x_{i+1})$, for all $i$ where $0 \leq i < 2$. Then the edge $(u_0,u_1)$ and $(x_0,x_1)$ are considered to be $same$ if and only if, $label(u_{i})=label(x_{i})$ for all $i$ where $0 \leq i < 2$.

Also, given an edge $(u_0,u_1) \equiv (M[i_1,j_1],M[i_2,j_2])$, we say
that edge $(u_0,u_1)$ `belongs to' column $j_2$, i.e., where the edge
ends; and we say $col((u_0,u_1)) \equiv col((M[i_1,j_1],M[i_2,j_2])) =
j_2$. Now we traverse all the edges and construct a set of sets
$\mathcal S = \{S_1\ldots S_\ell\}$ such that each $S_i, 1\leq i\leq
\ell$ contains the edges that are `same'. The set $S_i$ is named by
$\{u_0,u_1,u_2\}$ and we may refer to $S_i$
using its name. Now, we construct $P$-masks $P_{S_j}, 1\leq j\leq \ell$ such that $P_{S_j}[g] = 1$ if and only if, there is a set of $three$ vertices $\{u_0,u_1,u_2\}$ such that there exists an edge between each $(u_{i},u_{i+1})$ where $0 \leq i \leq 1$ and $\{u_0,u_1,u_2\}\in S_{j}$ having $col((u_0,u_1)) = g$. With the $P$-masks at our hand, we compute $R_{j+1}$ as follows:

\begin{equation}\label{Rj+1}
R_{j+1}=~ RSHIFT(R_j)~ AND~ D_{T_{j+1}} ~ AND~ LSHIFT(D_{T_{j+2}})~ AND~ P_{(T_j,T_{j+1},T_{j+2})}
\end{equation}

Here, RSHIFT indicates right shift, LSHIFT indicates left shift and AND is the usual bitwise AND operation. Note that, to locate the appropriate $P$-mask, we again need to perform a look up in the database constructed during the construction of the $P$-masks. Since a particular $P$-mask involves a set of $3$ (consecutive) vertices, we need a $3$D array to ensure constant time reference to it. Note that in Equation~\ref{Rj+1}, we have referred to the $P$-mask using the $3$ vertices of the corresponding vertex set. Example~\ref{Ex_match1} presents a complete execution of our algorithm.

\begin{example}\label{Ex_match1}
Suppose, $P = acbab$ and $T = acbbabcabab$. The $D$-masks of
$P$ are shown in Table~\ref{common Dmask}. The $P$-masks of
$P$ are shown in Table~\ref{oldP_mask}. Table~\ref{calculation} shows the detail computation of $R$. Explanation of the terms used in the Table~\ref{calculation} are as follows:   

\begin{itemize}
\item[$\mathcal{R}$] Right Shift Operation on the previous column
\item[$D_x$] $D$-Mask value for character `x'
\item[$\mathcal{L}$ $D_x$] Left Shift Operation on $D_x$
\item[$P_{(x,y,z)}$] $P$-Mask value of the set (x,y,z)
\item[$R_j$] Value of $R$ after $T_j$ has processed ( `1' in $m$-$1$ th row of $R_j$ column indicates that a match has been found ending at the corresponding column.)
\end{itemize}
\end{example}
The preprocessing is formally presented in Algorithm~\ref{algorithm for preprocessing1}. The main algorithm is presented in Algorithm~\ref{Algorithm for Swap matching(Short and Improved)}.

\begin{table}
\begin{center}
\begin{tabular}{|c|c|c|c|c|c|}
\hline
~ ~ ~ ~ &~ ~ D~ ~ &~ $D_a$~ &~ $D_b$~ &~ $D_c$~ &~ $D_x$~ \\
\hline
1&[ac]&1&0&1&0\\
\hline
2&[acb]&1&1&1&0\\
\hline
3&[cba]&1&1&1&0\\
\hline
4&[bab]&1&1&0&0\\
\hline
5&[ab]&1&1&0&0\\
\hline
\end{tabular}
\caption{The $D$-masks for pattern in Examples~\ref{Ex_match1} and~\ref{Ex_match2} }
\label{common Dmask}
\end{center}
\end{table}

\begin{table}
\begin{center}
\begin{minipage}{\textwidth}
\begin{tabular}{|c|c|c|c|c|c|c|c|c|c|c|c|c|c|c|c|}
\hline
&$P_{(a,a,b)}$&$P_{(a,b,a)}$&$P_{(a,b,b)}$&$P_{(a,b,c)}$&$P_{(a,c,a)}$&$P_{(a,c,b)}$&$P_{(b,a,b)}$&$P_{(b,b,a)}$&$P_{(b,c,a)}$&$P_{(b,c,b)}$&$P_{(c,a,a)}$&$P_{(c,a,b)}$&$P_{(c,b,a)}$&$P_{(c,b,b)}$&$P_{(x,x,x)}$\footnote{Here, $P_{(x,x,x)}$ indicates the edges that are not present in the $\mathcal{P}$-graph.}\\
\hline
1&1&1&1&1&1&1&1&1&1&1&1&1&1&1&1\\
\hline
2&0&0&0&1&1&1&0&0&0&0&1&1&0&0&0\\
\hline
3&1&1&1&0&0&0&0&0&1&1&0&1&1&1&0\\
\hline
4&0&0&1&0&0&0&1&1&0&0&0&1&1&0&0\\
\hline
5&0&0&0&0&0&0&0&0&0&0&0&0&0&0&0\\
\hline
\end{tabular}
\end{minipage}
\caption{$P$-masks for pattern in Example~\ref{Ex_match1} for SMALGO-\rom{1}}
\label{oldP_mask}
\end{center}
\end{table}

\begin{sidewaystable}
\tiny 
\tabcolsep 0.1pt
\begin{minipage}{\textwidth}
\begin{tabular}{|c|c|c|c|c|c|c|c|c|c|c|c|c|c|c|c|c|c|c|c|c|c|c|c|c|c|c|c|c|c|c|c|c|c|c|c|c|c|c|c|c|c|c|c|c|c|c|c|c|c|c|}
\hline
&-&$\mathcal{R}$&$D_a$& $P_{(x,x,x)}$\footnote{Here, $P_{(x,x,x)}$ indicates the edges that are not present in the $\mathcal{P}$-graph.}&$R_1$   &$\mathcal{R}$&$D_c$&$\mathcal{L}$ $D_b$&$P_{(a,c,b)}$&$R_2$&$\mathcal{R}$&$D_b$&$\mathcal{L}$ $D_b$&$P_{(c,b,b)}$&$R_3$&$\mathcal{R}$&$D_b$&$\mathcal{L}$ $D_a$&$P_{(b,b,a)}$&$R_4$ 
&$\mathcal{R}$&$D_a$&$\mathcal{L}$ $D_b$&$P_{(b,a,b)}$&$R_5$&$\mathcal{R}$&$D_b$&$\mathcal{L}$ $D_c$&$P_{(a,b,c)}$&$R_6$&$\mathcal{R}$&$D_c$&$\mathcal{L}$ $D_a$&$P_{(b,c,a)}$&$R_7$&$\mathcal{R}$&$D_a$&$\mathcal{L}$ $D_b$&$P_{(c,a,b)}$&$R_8$&$\mathcal{R}$&$D_b$&$\mathcal{L}$ $D_a$&$P_{(a,b,a)}$&$R_9$&$\mathcal{R}$&$D_a$&$\mathcal{L}$ $D_b$&$P_{(b,a,b)}$&$R_{10}$ \\
\hline
1&0&1&1&1&1 &1&1&1&1&1  &1&0&1&1&0  &1&0&1&1&0  &1&1&1&1&1  &1&0&1&1&0  &1&1&1&1&1  &1&1&1&1&1  &1&0&1&1&0  &1&1&1&1&1\\
\hline
2&0&0&1&0&0 &1&1&1&1&1  &1&1&1&0&0  &0&1&1&0&0  &0&1&1&0&0  &1&1&1&1&1  &0&1&1&0&0  &1&1&1&1&1  &1&1&1&0&0  &0&1&1&0&0\\
\hline
3&0&0&1&0&0 &0&1&1&0&0  &1&1&1&1&1  &0&1&1&0&0  &0&1&1&0&0  &0&1&0&0&0  &1&1&1&1&1  &0&1&1&1&0  &1&1&1&1&1  &0&1&1&0&0\\
\hline
4&0&0&1&0&0 &0&0&1&0&0  &0&1&1&0&0  &1&1&1&1&1  &0&1&1&1&0  &0&1&0&0&0  &0&0&1&0&0  &1&1&1&1&1  &0&1&1&0&0  &1&1&1&1&1\\
\hline 
5&0&0&1&0&0 &0&0&0&0&0  &0&1&0&0&0  &0&1&0&0&0  &1&1&0&0&0  &0&1&0&0&0  &0&0&0&0&0  &0&1&0&0&0  &1&1&0&0&0  &0&1&0&0&0\\
\hline
\end{tabular}
\end{minipage}
\caption{Detailed Calculation for text in Example~\ref{Ex_match1} for SMALGO-\rom{1}} \label{calculation}
\end{sidewaystable}

\begin{algorithm}
\caption{Computation of Preprocessing [$P$-masks] for SMALGO-\rom{1}}
\label{algorithm for preprocessing1}
\begin{algorithmic}[1]
\REQUIRE{\textbf{Pattern p}}
\STATE{$x \leftarrow pattern_{length}$}
\FOR{$i=0$ to $x-2$}\STATE{$pmask[p[i]][p[i+1]][p[i+2]] \leftarrow pmask[p[i]][p[i+1]][p[i+2]] | (1<<(x-i-2))$}
\STATE{{$pmask[p[i]][p[i+2]][p[i+1]] \leftarrow pmask[p[i]][p[i+2]][p[i+1]] | (1<<(x-i-2))$}}
\ENDFOR
\FOR{$i=0$ to $x-3$}\STATE{$pmask[p[i]][p[i+1]][p[i+3]] \leftarrow pmask[p[i]][p[i+1]][p[i+3]] | (1<<(x-i-2)) $}
\ENDFOR
\FOR{$i=1$ to $x-1$}\STATE{$pmask[p[i]][p[i-1]][p[i+1]] \leftarrow pmask[p[i]][p[i-1]][p[i+1]] | (1<<(x-i-1))$}
\ENDFOR
\FOR{$i=1$ to $x-2$}\STATE{$pmask[p[i]][p[i-1]][p[i+2]] \leftarrow pmask[p[i]][p[i-1]][p[i+2]] | (1<<(x-i-1))$}
\ENDFOR
\FOR{$i=0$ to $x-3$}\STATE{$pmask[p[i]][p[i+3]][p[i+2]] \leftarrow pmask[p[i]][p[i+3]][p[i+2]] | (1<<(x-i-3))$}
\ENDFOR
\FOR{$i=0$ to $x-4$}\STATE{$pmask[p[i]][p[i+2]][p[i+4]] \leftarrow pmask[p[i]][p[i+2]][p[i+4]] | (1<<(x-i-3))$}
\ENDFOR
\FOR{$i=0$ to $x-3$}\STATE{$pmask[p[i]][p[i+2]][p[i+3]] \leftarrow pmask[p[i]][p[i+2]][p[i+3]] | (1<<(x-i-3))$}
\ENDFOR
\RETURN{\textbf{$P$-mask pmask[]} for Pattern p}
\end{algorithmic}
\end{algorithm}

\begin{algorithm}
\caption{SMALGO-\rom{1}}
\label{Algorithm for Swap matching(Short and Improved)}
\begin{algorithmic}[1]
\STATE{$R_{0} \leftarrow 2^{pattern_{length}}-1$}
\STATE{$R_{0} \leftarrow R_{0}$ \& $D_{T_{0}}$}
\STATE{$R_{1} \leftarrow R_{0} >> 1$}
\STATE{$x \leftarrow 2$}
\STATE{$i:=0$}
\FOR{$j=0$ to $(n-3)$}
	\STATE{find $D_{j}$ for $Text[j]$}
	\STATE{$R_{j} \leftarrow R_{j}$ \& $pmask_{(T_{j},T_{j+1},T_{j+2})}$ \& $D_{T_{j+1}}$\& $(D_{T_{j+2}} << 1)$}
	\IF{$(R_{j}$ \& $x) = x$} \STATE{Match found ending at position $(j-1)$}
	\ENDIF
	\STATE{$R_{j+1} \leftarrow R_{j} >> 1$}
\ENDFOR
\end{algorithmic}
\end{algorithm}

\subsection{Analysis of SMALGO-\rom{1}}

\begin{table}
\begin{center}
\begin{tabular}{|c|c|}
\hline
Phase&Running Time\\
\hline
Computation of $D$-masks&$O(m/w(m + |\Sigma|))$\\
\hline
Computation of $P$-masks&$O(m/w~(m + {\vert\Sigma\vert}^{3}))$\\
\hline
Running time of Algorithm~\ref{Algorithm for Swap matching(Short and Improved)}&$O(m/w ~n)$\\
\hline
\end{tabular}
\caption{Running times of the different phases of SMALGO-\rom{1}}
\label{Table_RunningTime1}
\end{center}
\end{table}

The running times of the different phases of SMALGO-\rom{1} is listed in Table~\ref{Table_RunningTime1}. In the algorithm, we first initialize all the entries of $P$-masks which requires $O(m/w {\vert\Sigma\vert}^{3})$ time. Then, we start traversing the edges and corresponding $P$-masks in a name database ($3$-D array). Finding and updating the $P$-mask of corresponding edges can be done in constant time. As we have $O(m)$ edges, the total time needed for the computation of $P$-masks is $O(m/w (m + {\vert\Sigma\vert}^{3}))$. 
The computation of $D$-masks takes $O(m/w (m + {\vert\Sigma\vert}))$ time~\cite{BG92} when pattern is not degenerate. However, in our case, we need to assume that our pattern has a set of letters in each position. In this case, we require $O(m/w(m'+\Sigma))$ time where $m'$ is the sum of the cardinality of the sets at each position~\cite{BG92}. In general degenerate strings, $m'$ can be $m|\Sigma|$ in the worst case. However, in our case, $m' = |V^P| = O(m)$, where $V^P$ is the vertex set of the $\mathcal P$-graph. So, computation of the $D$-mask requires $O(m/w(m+\Sigma))$ time in the worst case. So the whole preprocessing takes $O(m/w (m + {\vert\Sigma\vert}^{3} + m + {\vert\Sigma\vert})))$ time. Assuming constant alphabet $\Sigma$ and pattern of size compatible with machine word length the preprocessing time becomes $O(m)$.

With the $P$-masks and $D$-masks at our hand, for our problem, we simply need to compute $R_{j}$ using Equation~\ref{Rj+1}. So, in total the construction of $R$ values require $O(m/w ~n)$ which is $O(n)$ if $m \sim w$. Therefore, in total the running time for SMALGO-\rom{1}, is linear 
assuming a constant alphabet and a pattern size similar to the word size of the target machine.

\subsection{\label{NewAlgo}The Second Algorithm : SMALGO-\rom{2}} 

In this section, we present another algorithm which is more space efficient. Instead of a $3$-D array we need only $2$-D arrays here. In order to understand the new algorithm, we need the following definitions.

\begin{definition}\label{levelchange}
A \emph{level change} indicates a change of row in the Matrix M having one of the following cases :
\begin{itemize}
\item A \textbf{Upward Change}, i.e., going from a position $(+1,j)$ to $(-1,j+1)$; 

\item A \textbf{Downward Change}, i.e., going from a position $(i,j)$ to $(+1,j+1)$ where 

$i = 0$ or $i = -1$; 

\item A \textbf{End of Swap/Middle-ward Change}, i.e., going from a position $(-1,j)$ to $(0,j+1)$. 
\end{itemize}

\end{definition}
In this approach, we only need to know which
of the $M[k,i+1], k\in[-1,0,+1]$ will follow $M[\ell,i]$ in the $\mathcal P$-graph. Thus we have to generate $P$-masks in the following way. Here we change the notion of two edges being `same' as follows.

Two edges $(u,v),(x,y)$ of the $\mathcal
P$-graph are said to be `same' if $label(u) =label(x)$ and
$label(v) =label(y)$, i.e., if the two edges have the same labels.
Also, given an edge $(u,v) \equiv (M[i_1,j_1],M[i_2,j_2])$, we say
that edge $(u,v)$ `belongs to' column $j_2$, i.e., where the edge
ends; and we say $col((u,v)) \equiv col((M[i_1,j_1],M[i_2,j_2])) =j_2$. 
Now we traverse all the edges and construct a set of sets
$\mathcal S = \{S_1\ldots S_\ell\}$ such that each $S_i, 1\leq i\leq
\ell$ contains the edges that are `same'. The set $S_i$ is named by
the (same) label of the edges it contains and we may refer to $S_i$
using its name. Now, we construct $P$-masks $P_{S_i}, 1\leq
i\leq \ell$ such that $P_{S_i}[k] = 1$ if, and only if, there is an
edge $(u,v)\in S_i$ having $col((u,v)) = k$. Clearly here, $\ell =
O(m)$.

Note that, to locate the appropriate $P$-mask, we again need to perform a look up in the database constructed during the construction of the $P$-masks. To maintain all $P$-masks we are keeping a $2$D array indexed by the consecutive vertices of an edge.

\begin{table}
\begin{center}
\begin{minipage}{\textwidth}
\begin{tabular}{|c|c|c|c|c|c|c|c|c|c|}
\hline
~ ~ ~ &~ $P_{(a,a)}$~ &~ $P_{(a,b)}$~ &~ $P_{(a,c)}$~ &~ $P_{(b,a)}$~ &~ $P_{(b,b)}$~ &~ $P_{(b,c)}$~ &~ $P_{(c,a)}$~ &~ $P_{(c,b)}$~ &~ $P_{(x,x)}$\footnote{Here, $P_{(x,x)}$ indicates the edges that are not present in the $\mathcal{P}$-graph.}\\
\hline
1&1&1&1&1&1&1&1&1&1\\
\hline
2&0&1&1&0&0&0&1&0&0\\
\hline
3&1&1&0&0&0&1&1&1&0\\
\hline
4&0&1&0&1&1&0&1&1&0\\
\hline
5&0&1&0&1&1&0&0&0&0\\
\hline
\end{tabular}
\end{minipage}
\caption{$P$-masks for patern $acbab$}
\label{newP_mask}
\end{center}
\end{table}

\begin{figure}[hbtp]
\begin{center}
\includegraphics[width=0.6\textwidth,angle=0]{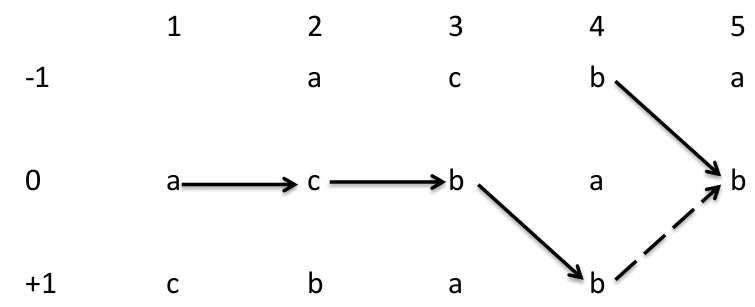}
\caption{Partial $\mathcal P$-graph for pattern $acbab$ } \label{swapproblem}
\end{center}
\end{figure}

However, the $P$-masks as defined above, and $D$-masks defined before are not sufficient to solve the SM problem as shown below with an example. Please note that the definition of $P$-mask in SMALGO-\rom{2} (i.e., the current algorithm) is different than that of SMALGO-\rom{1} (i.e., the algorithm presented in section~\ref{AIIR}).

\begin{example}\label{columnProblem}
In Table~\ref{newP_mask}, at the column named `P(b,b)', the value is $10011$ for pattern $acbab$. The `$1$' in the rightmost bit indicates that either one or both edge $((M[-1,4],M[0,5])$, $(M[+1,4],M[0,5]))$ exists, as shown in Figure~\ref{swapproblem}. We can not find out which one actually exists because our $P$-mask values are only dependent on the column positions ( i.e., the edge starts at Column $4$ and ends at Column $5$ ) irrespective of row positions ($-1, 0, +1$). So the algorithm will accept $acbbb$ as a swapped version
of the pattern $acbab$ which is clearly a false match.
\end{example} 

To solve the problem we need to be able to tell which \emph{level change} has occurred, \emph{Upward Change} or \emph{Middleward Change}. So, we introduce three new masks called $up$-$mask$, $down$-$mask$ and $middle$-$mask$ as discussed below. 

\begin{enumerate}
\item We construct \textbf{up-masks}, $up_{S_j}$, $1\leq j\leq \ell$ such that $up_{S_j}[g] = 1$ if and only if $edge(u_0,u_1) \equiv (M[+1,g-1],M[-1,g])$ exists with $col((u_0,u_1)) = g$.

\item We construct \textbf{down-masks}, $down_{S_j}$, $1\leq j\leq \ell$ such that $down_{S_j}[g] = 1$ if and only if either $edge(u_0,u_1) \equiv (M[-1,g-1],M[+1,g])$ or $edge(u_0,u_1) \equiv (M[0,g-1],M[+1,g])$ exists with $col((u_0,u_1)) = g$.

\item We construct \textbf{middle-masks}, $middle_{S_j}$, $1\leq j\leq \ell$ such that $middle_{S_j}[g] = 1$ if and only if either $edge(u_0,u_1) \equiv (M[0,g-1],M[0,g])$ or $edge(u_0,u_1) \equiv (M[-1,g-1],M[0,g])$ exists with $col((u_0,u_1)) = g$.

\end{enumerate}

The motivation and usefulness of the $3$ masks defined above will be clear from the following discussion.
It is easy to see that, to get a match, after a \emph{level change} at a particular position $(i_1,j)$, another \emph{level change} must occur at the next position, i.e., at position $(i_2,j+1)$ in the Matrix $M$; otherwise there can be no match. So we do the following based on the structure of the $\mathcal P$-graph.

\begin{enumerate}

\item If a \emph{Downward change} has occurred  then we have to check whether an \emph{Upward Change} occurs at the next position. We can do that by saving the previous \emph{down-mask} ($down_{(T_{j-1},T_{j})}$) and matching that value with the current \emph{up-mask} ($up_{(T_{j},T_{j+1})}$) and $R_j$. Otherwise there can be no match.

\item If an \emph{Upward Change} has occurred then we have to check whether \emph{Downward change} or a \emph{Middle-ward change} occurs at the next position. We can do that by saving the previous \emph{up-mask} ($up_{(T_{j-1},T_{j})}$) and matching that value with current \emph{down-mask}  ($down_{(T_{j},T_{j+1})}$), \emph{middle-mask} ($middle_{(T_{j},T_{j+1})}$) and $R_j$. Otherwise there can be no match.

\item This process continues repeatedly until either an \emph{End of Swap} occurs or an end of pattern is encountered. To check whether an end of swap occurs we have to keep previous \emph{up-mask} ($up_{(T_{j-1},T_{j})}$) and match that value with current \emph{middle-mask} ($middle_{(T_{j},T_{j+1})}$) and $R_j$. 
\end{enumerate}

In our algorithm, each of the previous checkings have to be done while we process each character. The algorithm is formally presented in Algorithm~\ref{Algorithm for Approximate String Matching Allowing for Fixed Length Translocation(Improved)}. The preprocessing of the algorithm is presented in Algorithm~\ref{algorithm for preprocessing}. In the algorithm, we are using a $2$-D array for $P$-masks, up-masks, down-masks and middle-masks. Example~\ref{Ex_match2} shows a complete execution of our algorithm.

\begin{example}\label{Ex_match2}
Suppose, $P = acbab$ and $T = acbbabcabab$. The $D$-masks of $P$ are shown in Table~\ref{common Dmask}. The $P$-masks of $P$ are shown in Table~\ref{newP_mask}. The new mask values are shown in Table~\ref{ournewmask} and Table~\ref{ourrightresult} shows the detail computation of $R$ bit array. Explanation of the terms used in Table~\ref{ourrightresult} are as follows:   

\begin{itemize}
\item[SH] Right Shift Operation on the previous column
\item[$D_x$] $D$-Mask value for character `x'
\item[$P_{(x,y)}$] $P$-Mask value of the set (x,y)
\item[$R_j$] Value of $R$ after $T_j$ has been proceed ( `1' in $m$-th row of $R_j$ column indicates that a match has been found )
\end{itemize}
\end{example}


\begin{table}
\begin{center}
\begin{tabular}{|c|c|c|c|c|c|c|c|c|}
\hline
~  ~&~ up-mask~ &~ middle-mask~ &~ down-mask~ \\
\hline
~ (a,a)~ &~ 00000~ &~ 00000~ &~ 00100~ \\
\hline
~ (a,b)~ &~ 00010~ &~ 00101~ &~ 01000~ \\
\hline
~ (a,c)~ &~ 00000~ &~ 01000~ &~ 10000~ \\
\hline
~ (b,a)~ &~ 00001~ &~ 00011~ &~ 00000~ \\
\hline 
~ (b,b)~ &~ 00000~ &~ 00001~ &~ 00010~ \\
\hline
~ (b,c)~ &~ 00100~ &~ 00000~ &~ 10000~ \\
\hline
~ (c,a)~ &~ 01000~ &~ 00010~ &~ 00100~ \\
\hline
~ (c,b)~ &~ 00000~ &~ 00100~ &~ 00010~ \\
\hline
\end{tabular}
\caption{Masks for Algorithm 2 of the pattern in Example~\ref{Ex_match2}} \label{ournewmask}
\end{center}
\end{table}


\begin{sidewaystable}
\scriptsize 
\tabcolsep 0.4pt
\begin{minipage}{\textwidth}
\begin{tabular}
{|c|c|c|c|c|c|c|c|c|c|c|c|c|c|c|c|c|c|c|c|c|c|c|c|c|c|c|c|c|c|c|c|c|c|c|c|c|c|c|c|c|c|c|c|c|c|}
\hline
~ ~ &~ ~ &SH&$D_a$&$P_{(x,x)}$\footnote{Here, $P_{(x,x)}$ indicates the edges that are not present in the $\mathcal{P}$-graph.} & $R_1$ &SH&$D_c$&$P_{(a,c)}$&$R_2$&SH&$D_b$&$P_{(c,b)}$&$R_3$&SH&$D_b$&$P_{(b,b)}$&$R_4$&SH&$D_a$&$P_{(b,a)}$&$R_5$&SH&$D_b$&$P_{(a,b)}$&$R_6$&SH&$D_c$&
$P_{(b,c)}$&$R_7$&SH&$D_a$&$P_{(c,a)}$&$R_8$&SH&$D_b$&$P_{(a,b)}$&$R_9$&SH&$D_a$&$P_{(b,a)}$&$R_{10}$&SH&$D_b$&$P_{(a,b)}$&$R_{11}$\\
\hline
1&0&1&1&1&1&1&1&1&1&1&0&1&0&1&0&1&0&1&1&1&1&1&0&1&0&1&1&1&1&1&1&1&1&1&0&1&0&1&1&1&1&1&0&1&0\\
\hline
2&0&0&1&0&0&1&1&1&1&1&1&0&0&0&1&0&0&0&1&0&0&1&1&1&1&0&1&0&0&1&1&1&1&1&1&1&1&0&1&0&0&1&1&1&1\\
\hline
3&0&0&1&0&0&0&1&0&0&0&1&1&1&0&1&0&0&0&1&0&0&0&1&1&0&1&1&1&1&0&1&1&0&1&1&1&1&1&1&0&0&0&1&1&0\\
\hline
4&0&0&1&0&0&0&0&0&0&0&1&1&0&1&1&1&1&0&1&1&0&0&1&1&0&0&0&0&0&1&1&1&1&0&1&1&0&1&1&1&1&0&1&1&0\\
\hline 
5&0&0&1&0&0&0&0&0&0&0&1&0&0&0&1&1&0&1&1&1&1&0&1&1&0&0&0&0&0&0&1&0&0&1&1&1&1&0&1&1&0&1&1&1&1\\
\hline
\end{tabular}
\end{minipage}
\caption{Detailed Calculation for text in Example~\ref{Ex_match2}} \label{ourrightresult}
\end{sidewaystable}

\begin{algorithm}
\caption{Algorithm for Computation of all the Masks [Preprocessing] for SMALGO-\rom{2}}
\label{algorithm for preprocessing}
\begin{algorithmic}[1]
\REQUIRE{\textbf{Pattern p}}
\STATE{$x \leftarrow pattern_{length}$}
\FOR{$i=0$ to $x-2$}\STATE{$pmask[p[i]][p[i+1]] \leftarrow pmask[p[i]][p[i+1]] | (1<<(x-i-2))$}
\STATE{{$middle[p[i]][p[i+1]] \leftarrow middle[p[i]][p[i+1]] | (1<<(x-i-2))$}}
\ENDFOR
\FOR{$i=0$ to $x-3$}\STATE{$pmask[p[i]][p[i+2]] \leftarrow pmask[p[i]][p[i+2]] | (1<<(x-i-2)) | (1<<(x-i-3))$}
\STATE{{$down[p[i]][p[i+2]] \leftarrow down[p[i]][p[i+2]] | (1<<(x-i-2))$}}
\STATE{{$middle[p[i]][p[i+2]] \leftarrow middle[p[i]][p[i+2]] | (1<<(x-i-3))$}}
\ENDFOR
\FOR{$i=1$ to $x-1$}\STATE{$pmask[p[i]][p[i-1]] \leftarrow pmask[p[i]][p[i-1]] | (1<<(x-i-1))$}
\STATE{{$up[p[i]][p[i-1]] \leftarrow up[p[i]][p[i-1]] | (1<<(x-i-1))$}}
\ENDFOR
\FOR{$i=1$ to $x-4$}\STATE{$pmask[p[i]][p[i+3]] \leftarrow pmask[p[i]][p[i+3]] | (1<<(x-i-3))$}
\STATE{{$down[p[i]][p[i+3]] \leftarrow up[p[i]][p[i+3]] | (1<<(x-i-3))$}}
\ENDFOR
\FOR{$i=1$ to $x-1$}\STATE{$d[p[i]] \leftarrow d[p[i]] | (1<<(x-i)) | (1<<(x-i-1)) | (1<<(x-i-2))$}
\ENDFOR
\RETURN{\textbf{$P$-masks pmask[]}, \textbf{$D$-masks d[]}, \textbf{up-masks up[]}, \textbf{down-masks down[]} and \textbf{middle-masks middle[]} for Pattern p}
\end{algorithmic}
\end{algorithm}

\begin{algorithm}
\caption{SMALGO-\rom{2}}
\label{Algorithm for Approximate String Matching Allowing for Fixed Length Translocation(Improved)}
\begin{algorithmic}[1]
\REQUIRE{\textbf{Text T}, \textbf{up-mask up}, \textbf{down-mask down}, \textbf{middle-mask middle}, \textbf{P-mask pmask},   \textbf{D-mask D}  for given pattern p}
\STATE{$R_{0} \leftarrow 2^{pattern_{length}}-1$}
\STATE{$checkup \leftarrow checkdown \leftarrow 0$}
\STATE{$R_{0} \leftarrow R_{0}$ \& $D_{T_{0}}$}
\STATE{$R_{1} \leftarrow R_{0} >> 1$}
\STATE{$x \leftarrow 1$}
\FOR{$j=0$ to $(n-2)$}
	\STATE{$R_{j} \leftarrow R_{j}$ \& $pmask_{(T_{j},T_{j+1})}$ \& $D_{T_{j+1}}$}
	\STATE{$temp \leftarrow prevcheckup >> 1$}
	\STATE{$checkup \leftarrow checkup$ $|$  $up_{(T_{j},T_{j+1})}$}
	\STATE{$checkup \leftarrow checkup$ \& $\sim down_{(T_{j},T_{j+1})}$ \& $\sim middle_{(T_{j},T_{j+1})}$}
	\STATE{$prevcheckup \leftarrow checkup$}
	\STATE{$R_{j} \leftarrow \sim (temp$ \& $checkup)$ \& $R_{j}$}
	\STATE{$temp \leftarrow prevcheckdown >> 1$}
	\STATE{$checkdown \leftarrow checkdown$ $|$  $down_{(T_{j},T_{j+1})}$}
	\STATE{$checkdown \leftarrow checkdown$ \& $\sim up_{(T_{j},T_{j+1})}$}
	\STATE{$prevcheckdown \leftarrow checkdown$}
	\STATE{$R_{j} \leftarrow \sim (temp$ \& $checkdown)$ \& $R_{j}$}
	\IF{$(R_{j}$ \& $x) = x$} \STATE{Match found ending at position $(j-1)$}
	\ENDIF
	\STATE{$R_{j+1} \leftarrow R_{j} >> 1$}
	\STATE{$checkup \leftarrow checkup >> 1$}
	\STATE{$checkdown \leftarrow checkdown >> 1$}
\ENDFOR
\end{algorithmic}
\end{algorithm}

\subsection{Analysis of SMALGO-\rom{2}}\label{analysis}

\begin{table}
\begin{center}
\begin{tabular}{|c|c|}
\hline
Phase&Running Time\\
\hline
Computation of $D$-masks&$O(m/w(m+|\Sigma|))$\\
\hline
Computation of $P$-masks&$O(m/w (m+{\vert\Sigma\vert}^{2}))$\\
\hline
Computation of $Up$-masks&$O(m/w (m+{\vert\Sigma\vert}^{2}))$\\
\hline
Computation of $Down$-masks&$O(m/w (m+{\vert\Sigma\vert}^{2}))$\\
\hline
Computation of $Middle$-masks&$O(m/w (m+{\vert\Sigma\vert}^{2}))$\\
\hline
Running time of Algorithm~\ref{Algorithm for Approximate String Matching Allowing for Fixed Length Translocation(Improved)}&$O(m/w ~n)$\\
\hline
\end{tabular}
\caption{Running times of the different phases of of SMALGO-\rom{2}}
\label{Table_RunningTime2}
\end{center}
\end{table}

The running times of the different phases of SMALGO-\rom{2} is listed in Table~\ref{Table_RunningTime2}. In SMALGO-\rom{2}, we first initialize all the entries of $P$-masks which requires $O(m/w {\vert\Sigma\vert}^{2})$ time. Then, we start traversing the edges and corresponding $P$-masks in a name database ($2$-D array). Finding and updating the $P$-mask of corresponding edges can be done in constant time. As we have $O(m)$ edges, the total time needed for computation of $P$-mask is $O(m/w (m + {\vert\Sigma\vert}^{2}))$. 
Similarly, the computation of up-masks, down-masks and middle-masks can be done in $O(m/w (m + {\vert\Sigma\vert}^{2})$ time as well. The computation of $D$-mask takes $O(m/w (m + {\vert\Sigma\vert}))$ time.
So the whole preprocessing takes $O(m/w (4(m + {\vert\Sigma\vert}^{2}) + m + {\vert\Sigma\vert}))))$ time. Assuming constant alphabet $\Sigma$ and pattern of size compatible with machine word length the preprocessing time becomes $O(m)$.

With all the masks at our hand, for our problem, we simply need to compute $R_{j}$ by some simple calculation. Each step of the calculation, including locating the appropriate masks, needs constant amount of time. So, in total the construction of $R$ values require $O(m/w ~n)$ which is $O(n)$ when $w \sim m$.

Therefore, in total the running time for SMALGO-\rom{2}, is linear
assuming a constant alphabet and a pattern size similar to the word size of the target machine.

\section{Experimental Results}\label{experiment}

We have conducted extensive experiments to compare the actual running time of the existing (non FFT) swap matching algorithms in the literature~\cite{CS,CCS} with ours. In this section, we present our findings based on the experiments conducted. The following acronyms are used in the presented results to identify different algorithms.

\begin{itemize}
\item[CS] CROSS-SAMPLING algorithm of~\cite{CS}
\item[BPCS] BP-CROSS-SAMPLING algorithm of~\cite{CS}
\item[BCS] BACKWARD-CROSS-SAMPLING algorithm of~\cite{CCS}
\item[BPBCS] BP-BACKWARD-CROSS-SAMPLING algorithm of~\cite{CCS}
\item[ALG-\rom{1}] SMALGO-\rom{1} of this paper
\item[ALG-\rom{2}] SMALGO-\rom{2} of this paper
\end{itemize}

We have chosen to exclude the naive algorithm and all algorithms in the literature based on FFT techniques from our experiments, because, the overhead of such algorithms is quite high resulting in a bad performance. All algorithms have been implemented in Microsoft Visual C++ in Release Mode on a PC with Intel Pentium D processor of 2.8 GHz having a memory of 2GB.

\subsection{Datasets}
All algorithms have been tested on random texts, on a Genome sequence, on a Protein sequence and on a natural language text buffer with patterns of length, $m$ = 4, 8, 12, 16, 20, 24, 28, 32. In the Tables below running times have been expressed in the hundredth of a second and best results are highlighted.

In the case of random texts we have adopted a similar strategy of~\cite{CS,CCS}. In particular, the algorithm has been tested on six $Rand\Sigma$ problem sets (for $\vert\Sigma\vert$ = 4, 8, 16, 32, 64 and 128). Each $Rand\Sigma$ problem consists in searching a set of 100 random patterns for any given length value in a 4MB long random text over a common alphabet of size $\vert\Sigma\vert$. In order to make the test more effective, in our experiments half of the patterns are randomly chosen and rests are picked from the text randomly so that they surely appear in the text at least once.

We also follow a strategy similar to that of~\cite{CS,CCS} for the tests on real world problems. 
We have been performed tests on a Genome sequence, on a protein sequence and on a natural text buffer. The genome sequence we used for the tests is a sequence of 4,638,690 base pairs of $Escherichia~ coli$ taken from the file \emph{E.coli} of the large Canterbury Corpus~\cite{corpus}.
The tests on the Protein sequence have been performed using a 2.4MB file containing a protein sequence from the Human Genome with 22 different characters. The experiments on the natural language text buffer have been done on the file \emph{world192.txt} (The CIA World Fact Book) of the Large Canterbury Corpus~\cite{corpus}. This file contains 2,473,400 characters drawn from an alphabet of 93 different characters.

\subsection{Running Times of Random Problems}

The running times for different algorithms for this experiment are reported in Tables~\ref{Rand4} -~\ref{Rand128}. From the results, we see that, in general, ALG-\rom{1} (SMALGO-\rom{1}) runs faster than BPBCS for smaller patterns and small alphabet size whereas BPBCS performs better when pattern and alphabet size are relatively large.


\begin{table}
\begin{center}
\begin{tabular}{|c|c|c|c|c|c|c|c|c|}
\hline
m&4&8&12&16&20&24&28&32\\
\hline
CS &~ 61.247~ &~ 61.137~ &~ 61.252~ &~ 61.043~ &~ 61.468~ &~ 63.472~ &~ 68.489~ &~ 66.090~ \\
\hline
BCS &~ 33.366~ &~ 22.865~ &~ 18.523~ &~ 16.580~ &~ 15.289~ &~ 14.585~ &~ 13.628~ &~ 13.087~ \\
\hline
BPCS &~ 1.914~ &~ 1.867~ &~ 1.849~ &~ 1.864~ &~ 1.861~ &~ 1.908~ &~ 1.859~ &~ 1.860~ \\
\hline
~ BPBCS~ &~ 3.552~ &~ 2.001~ &~ 1.451~ &~ 1.124~ &~ 0.968~ &~ 0.835~ &~ 0.737~ &~ 0.672~ \\
\hline 
ALG-\rom{2} &~ 4.062~ &~ 4.081~ &~ 4.090~ &~ 4.093~ &~ 4.124~ &~ 4.082~ &~ 4.085~ &~ 4.095~ \\
\hline
ALG-\rom{1} &~ \textbf{0.631}~ &~ \textbf{0.626}~ &~ \textbf{0.631}~ &~ \textbf{0.631}~ &~ \textbf{0.636}~ &~ \textbf{0.636}~ &~ \textbf{0.640}~ &~ \textbf{0.629}~ \\
\hline
\end{tabular}
\caption{Running time for Rand4 problems}
\label{Rand4}
\end{center}
\end{table}

\begin{table}
\begin{center}
\begin{tabular}{|c|c|c|c|c|c|c|c|c|}
\hline
m&4&8&12&16&20&24&28&32\\
\hline
CS &~ 52.447~ &~ 52.407~ &~ 52.356~ &~ 52.357~ &~ 52.457~ &~ 52.424~ &~ 52.443~ &~ 52.424~ \\
\hline
BCS &~ 23.139~ &~ 16.532~ &~ 12.539~ &~ 10.810~ &~ 9.701~ &~ 9.111~ &~ 8.502~ &~ 7.996~ \\
\hline
BPCS &~ 1.861~ &~ 1.858~ &~ 1.857~ &~ 1.864~ &~ 1.896~ &~ 1.905~ &~ 2.001~ &~ 1.853~ \\
\hline
~ BPBCS~ &~ 2.156~ &~ 1.319~ &~ 0.944~ &~ 0.743~ &~ \textbf{0.621}~ &~ \textbf{0.533}~ &~ \textbf{0.476}~ &~ \textbf{0.421}~ \\
\hline 
ALG-\rom{2} &~ 4.061~ &~ 4.066~ &~ 4.051~ &~ 4.062~ &~ 4.060~ &~ 4.059~ &~ 4.064~ &~ 4.063~ \\
\hline
ALG-\rom{1} &~ \textbf{0.674}~ &~ \textbf{0.684}~ &~ \textbf{0.671}~ &~ \textbf{0.675}~ &~ 0.654~ &~ 0.676~ &~ 0.633~ &~ 0.633~ \\
\hline
\end{tabular}
\caption{Running time for Rand8 problems}
\label{Rand8}
\end{center}
\end{table}

\begin{table}
\begin{center}
\begin{tabular}{|c|c|c|c|c|c|c|c|c|}
\hline
m&4&8&12&16&20&24&28&32\\
\hline
CS &~ 51.632~ &~ 52.622~ &~ 53.139~ &~ 53.450~ &~ 51.829~ &~ 55.243~ &~ 53.136~ &~ 52.847~ \\
\hline
BCS &~ 18.718~ &~ 13.350~ &~ 11.130~ &~ 9.084~ &~ 7.711~ &~ 7.045~ &~ 6.578~ &~ 6.103~ \\
\hline
BPCS &~ 1.864~ &~ 1.870~ &~ 1.857~ &~ 1.844~ &~ 1.862~ &~ 1.851~ &~ 1.856~ &~ 1.864~ \\
\hline
~ BPBCS~ &~ 1.340~ &~ 0.917~ &~ 0.700~ &~ \textbf{0.555}~ &~ \textbf{0.459}~ &~ \textbf{0.393}~ &~ \textbf{0.349}~ &~ \textbf{0.314}~ \\
\hline 
ALG-\rom{2} &~ 4.063~ &~ 4.062~ &~ 4.066~ &~ 4.063~ &~ 4.075~ &~ 4.083~ &~ 4.062~ &~ 4.074~ \\
\hline
ALG-\rom{1} &~ \textbf{0.636}~ &~ \textbf{0.634}~ &~ \textbf{0.662}~ &~ 0.631~ &~ 0.635~ &~ 0.640~ &~ 0.632~ &~ 0.643~ \\
\hline
\end{tabular}
\caption{Running time for Rand16 problems}
\label{Rand16}
\end{center}
\end{table}

\begin{table}
\begin{center}
\begin{tabular}{|c|c|c|c|c|c|c|c|c|}
\hline
m&4&8&12&16&20&24&28&32\\
\hline
CS &~ 52.644~ &~ 54.104~ &~ 51.123~ &~ 51.296~ &~ 50.795~ &~ 53.045~ &~ 54.296~ &~ 53.458~ \\
\hline
BCS &~ 16.628~ &~ 11.472~ &~ 8.682~ &~ 7.618~ &~ 6.684~ &~ 5.966~ &~ 5.474~ &~ 5.318~ \\
\hline
BPCS &~ 1.862~ &~ 1.863~ &~ 1.866~ &~ 1.858~ &~ 1.862~ &~ 1.851~ &~ 1.910~ &~ 1.856~ \\
\hline
~ BPBCS~ &~ 0.950~ &~ 0.637~ &~ \textbf{0.532}~ &~ \textbf{0.453}~ &~ \textbf{0.394}~ &~ \textbf{0.363}~ &~ \textbf{0.307}~ &~ \textbf{0.274}~ \\
\hline 
ALG-\rom{2} &~ 4.100~ &~ 4.094~ &~ 4.104~ &~ 4.093~ &~ 4.102~ &~ 4.108~ &~ 4.099~ &~ 4.101~ \\
\hline
ALG-\rom{1} &~ \textbf{0.636}~ &~ \textbf{0.633}~ &~ 0.630~ &~ 0.632~ &~ 0.629~ &~ 0.631~ &~ 0.631~ &~ 0.629~ \\
\hline
\end{tabular}
\caption{Running time for Rand32 problems}
\label{Rand32}
\end{center}
\end{table}

\begin{table}
\begin{center}
\begin{tabular}{|c|c|c|c|c|c|c|c|c|}
\hline
m&4&8&12&16&20&24&28&32\\
\hline
CS &~ 49.482~ &~ 47.775~ &~ 50.448~ &~ 49.668~ &~ 52.965~ &~ 51.000~ &~ 52.663~ &~ 53.103~ \\
\hline
BCS &~ 14.892~ &~ 9.850~ &~ 7.615~ &~ 6.481~ &~ 6.692~ &~ 5.484~ &~ 5.242~ &~ 5.066~ \\
\hline
BPCS &~ 1.846~ &~ 1.855~ &~ 1.864~ &~ 1.919~ &~ 1.923~ &~ 1.851~ &~ 1.917~ &~ 1.863~ \\
\hline
~ BPBCS~ &~ 0.739~ &~ \textbf{0.475}~ &~ \textbf{0.370}~ &~ \textbf{0.334}~ &~ \textbf{0.294}~ &~ \textbf{0.282}~ &~ \textbf{0.271}~ &~ \textbf{0.233}~ \\
\hline
ALG-\rom{2} &~ 4.334~ &~ 4.341~ &~ 4.342~ &~ 4.336~ &~ 4.347~ &~ 4.355~ &~ 4.390~ &~ 4.341~ \\
\hline
ALG-\rom{1} &~ \textbf{0.635}~ &~ 0.635~ &~ 0.686~ &~ 0.626~ &~ 0.637~ &~ 0.636~ &~ 0.632~ &~ 0.630~ \\
\hline
\end{tabular}
\caption{Running time for Rand64 problems}
\label{Rand64}
\end{center}
\end{table}

\begin{table}
\begin{center}
\begin{tabular}{|c|c|c|c|c|c|c|c|c|}
\hline
m&4&8&12&16&20&24&28&32\\
\hline
CS &~ 49.939~ &~ 48.608~ &~ 50.570~ &~ 52.593~ &~ 50.797~ &~ 50.075~ &~ 49.481~ &~ 49.640~ \\
\hline
BCS &~ 14.411~ &~ 9.541~ &~ 7.513~ &~ 6.546~ &~ 6.769~ &~ 5.242~ &~ 4.743~ &~ 4.573~ \\
\hline
BPCS &~ 1.855~ &~ 1.866~ &~ 1.849~ &~ 1.851~ &~ 1.855~ &~ 1.858~ &~ 1.861~ &~ 1.864~ \\
\hline
~ BPBCS~ &~ 0.688~ &~ \textbf{0.430}~ &~ \textbf{0.325}~ &~ \textbf{0.288}~ &~ \textbf{0.274}~ &~ \textbf{0.229}~ &~ \textbf{0.216}~ &~ \textbf{0.204}~ \\
\hline
ALG-\rom{2} &~ 4.429~ &~ 4.431~ &~ 4.426~ &~ 4.477~ &~ 4.408~ &~ 4.032~ &~ 4.422~ &~ 4.422~ \\
\hline
ALG-\rom{1} &~ \textbf{0.651}~ &~ 0.661~ &~ 0.661~ &~ 0.640~ &~ 0.653~ &~ 0.638~ &~ 0.633~ &~ 0.644~ \\
\hline
\end{tabular}
\caption{Running time for Rand128 problems}
\label{Rand128}
\end{center}
\end{table}

\begin{table}
\begin{center}
\begin{tabular}{|c|c|c|c|c|c|c|c|c|}
\hline
m&4&8&12&16&20&24&28&32\\
\hline
CS &~ 74.771~ &~ 74.729~ &~ 74.890~ &~ 74.564~ &~ 73.251~ &~ 77.605~ &~ 71.991~ &~ 73.472~ \\
\hline
BCS &~ 78.499~ &~ 52.828~ &~ 43.492~ &~ 38.730~ &~ 35.401~ &~ 33.078~ &~ 31.815~ &~ 30.234~ \\
\hline
BPCS &~ 2.224~ &~ 2.157~ &~ 2.152~ &~ 2.152~ &~ 2.146~ &~ 2.152~ &~ 2.159~ &~ 2.164~ \\
\hline
~ BPBCS~ &~ 3.977~ &~ 2.228~ &~ 1.619~ &~ 1.283~ &~ 1.084~ &~ 0.930~ &~ 0.830~ &~ 0.749~ \\
\hline
ALG-\rom{2} &~ 4.732~ &~ 4.734~ &~ 4.739~ &~ 4.744~ &~ 4.745~ &~ 4.730~ &~ 4.709~ &~ 4.711~ \\
\hline
ALG-\rom{1} &~ \textbf{0.600}~ &~ \textbf{0.619}~ &~ \textbf{0.639}~ &~ \textbf{0.596}~ &~ \textbf{0.611}~ &~ \textbf{0.606}~ &~ \textbf{0.582}~ &~ \textbf{0.583}~ \\
\hline
\end{tabular}
\caption{Running time for a genome sequence ($\Sigma$ = $4$)} \label{genome}
\end{center}
\end{table}

\begin{table}
\begin{center}
\begin{tabular}{|c|c|c|c|c|c|c|c|c|}
\hline
m&4&8&12&16&20&24&28&32\\
\hline
CS &~ 55.764~ &~ 54.316~ &~ 55.594~ &~ 51.737~ &~ 50.797~ &~ 50.491~ &~ 50.154~ &~ 50.889~ \\
\hline
BCS &~ 33.825~ &~ 24.180~ &~ 21.250~ &~ 17.179~ &~ 14.344~ &~ 13.464~ &~ 12.506~ &~ 12.075~ \\
\hline
BPCS &~ 1.852~ &~ 1.853~ &~ 1.866~ &~ 1.866~ &~ 1.878~ &~ 1.859~ &~ 1.859~ &~ 2.027~ \\
\hline
~ BPBCS~ &~ 1.107~ &~ 0.771~ &~ 0.617~ &~ \textbf{0.504}~ &~ \textbf{0.428}~ &~ \textbf{0.367}~ &~ \textbf{0.314}~ &~ \textbf{0.282}~ \\
\hline
ALG-\rom{2} &~ 4.064~ &~ 4.086~ &~ 4.098~ &~ 4.062~ &~ 4.068~ &~ 4.056~ &~ 4.076~ &~ 4.075~ \\
\hline
ALG-\rom{1} &~ \textbf{0.613}~ &~ \textbf{0.606}~ &~ \textbf{0.599}~ &~ 0.611~ &~ 0.618~ &~ 0.602~ &~ 0.613~ &~ 0.622~ \\
\hline
\end{tabular}
\caption{Running time for a protein sequence ($\Sigma$ = $22$)} \label{protein}
\end{center}
\end{table}

\begin{table}
\begin{center}
\begin{tabular}{|c|c|c|c|c|c|c|c|c|}
\hline
m&4&8&12&16&20&24&28&32\\
\hline
CS &~ 30.205~ &~ 33.177~ &~ 30.626~ &~ 33.344~ &~ 28.872~ &~ 31.382~ &~ 30.744~ &~ 29.362~ \\
\hline
BCS &~ 29.917~ &~ 26.244~ &~ 25.232~ &~ 28.203~ &~ 30.407~ &~ 26.468~ &~ 30.966~ &~ 26.778~ \\
\hline
BPCS &~ 1.178~ &~ 1.166~ &~ 1.181~ &~ 1.154~ &~ 1.146~ &~ 1.150~ &~ 1.146~ &~ 1.132~ \\
\hline
~ BPBCS~ &~ 0.793~ &~ 0.701~ &~ .657~ &~ 0.723~ &~ 0.768~ &~ 0.691~ &~ 0.738~ &~ 0.694~ \\
\hline
ALG-\rom{2} &~ 2.508~ &~ 2.510~ &~ 2.506~ &~ 2.507~ &~ 2.546~ &~ 2.501~ &~ 2.510~ &~ 2.505~ \\
\hline
ALG-\rom{1} &~ \textbf{0.632}~ &~ \textbf{0.620}~ &~ \textbf{0.625}~ &~ \textbf{0.616}~ &~ \textbf{0.609}~ &~ \textbf{0.634}~ &~ \textbf{0.618}~ &~ \textbf{0.608}~ \\
\hline
\end{tabular}
\caption{Running time for a natural language text buffer ($\Sigma$ = $93$)} \label{natural}
\end{center}
\end{table}


\subsection{Running Times for Real World Problems}

The running time of different algorithms in these different experiments are reported in Tables~\ref{genome} -~\ref{natural}. From the experiments, we see that ALG-\rom{1} runs faster for all pattern lengths in genome sequence and natural language text buffer. However in protein sequence, ALG-\rom{1} performs best for smaller patterns whereas BPBCS performs better for larger patterns.




\section{\label{conclusion}Conclusion}
In this paper, we have revisited the Swap Matching problem, a well-studied variant of the classic pattern matching problem. In particular, we have presented a graph theoretic model to solve the swap matching problem and devised two novel algorithms based on this model. The resulting algorithms are adaptations of the classic shift-and algorithm~\cite{CharrasL04} and runs in linear time for finite alphabet if the pattern-length is similar to the word-size in the target machine. Note that our algorithms like the work of~\cite{CS,CCS} does not use FFT techniques. Though both algorithms are based on the same classic shift-and algorithm, they are different in their technique/approach. Moreover, the techniques used in our algorithms are quite simple and easy to implement as well as understand.

We believe that our graph theoretic model could be used to devise more efficient algorithms and a similar approach can be taken to model similar other variants of the classic pattern matching problem. Furthermore, it would be interesting to `swap' the definitions of $\mathcal T$- graph and $\mathcal P$- graph and investigate whether efficient pattern matching techniques for Directed Acyclic Graphs can be employed to devise efficient off-line and online algorithms for swap matching.

\bibliography{swap}
\bibliographystyle{abbrv}

\end{document}